\newtheorem{theorem}{Theorem}
\newtheorem{lemma}[theorem]{Lemma}
\theoremstyle{definition}
\theoremstyle{remark}
\theoremstyle{remark}
\newtheorem{example}{Example}
\numberwithin{equation}{section}
\newcommand\mult{\operatorname{mult}}
\title[Localization of \v{S}eba billiard eigenfunctions ]{Eigenvalues of \v{S}eba billiards with localization of low-energy eigenfunctions}
\author{Minjae Lee}
\address{Department of Mathematics, University of California, Berkeley, CA 94720, USA}
\email{lee.minjae@math.berkeley.edu}
\begin{document}

\maketitle

\begin{abstract}
We study the localization of eigenfunctions produced by a point scatterer on a thin rectangle. We find an explicit set of eigenfunctions localized to part of the rectangle by showing that the one-dimensional Schr\"{o}dinger operator with a Dirac delta potential asymptotically governs the spectral properties of the two-dimensional point scatterer. We also find the rate of localization in terms of the aspect ratio of the rectangle. In addition, we present numerical results regarding the asymptotic behavior of the localization.
\end{abstract}



\section{Introduction}

The \v{S}eba billiard, a point scatterer on a two-dimensional plate, was first introduced in \cite{seba} as a simple model to study quantum chaos. It is formally given by the Schr\"{o}dinger operator 
\begin{equation}\label{2ddelta}-\Delta +c\delta_{\mathbf{x}_0}\end{equation} where \(\Delta\) is the Dirichlet Laplacian, \(\mathbf{x}_0\) is a point in the plate and \(c\in\mathbb{R}\) but requires renormalization or spectral theory to be properly defined. More precisely, a point scatterer at \(\mathbf{x}_0\) is defined as a self-adjoint extension \cite{reedsimon2} of the Laplacian whose domain consists of functions vanishing near \(\mathbf{x}_0\). Such an extension is not unique and we can parametrize all possible self-adjoint extensions with a single parameter \(\alpha \in (-\infty,\infty]\) which can be considered as a number related to the strength of the point scatterer. See \cite{Shigehara1,Shigehara2,rudnick, Ueberschar} for further developments.

In this paper we consider a point scatterer on a \textit{thin} rectangle. The spectral theory of the Laplacian on a thin domain has been studied in mathematical physics as a model for waveguides, nanotubes, quantum wires and integrated circuits. Such objects are called quasi one-dimensional since its length in one direction is significantly larger than those in the other directions. The simplest example of such domain is a rectangle in which the \textit{eccentricity},  the ratio of the width to the height, is very large. There are other examples such as a cylindrical shell, \(\epsilon\)-neighborhood of a curve and \(\epsilon\)-neighborhood of a graph. Then it is natural to ask if a 1D model on a curve or a graph corresponding to the thin domain can approximate the spectral properties of the 2D operator such as the Laplacian. This question for the Laplacian has been answered in various contexts. See \cite{thin} for details. See also \cite{review1, irregular, fractal, classicalwaves, curvedstrip} for the spectral properties of the Laplacian on other kinds of domains.

In \cite{minjae}, we showed numerically that a point scatterer located inside a thin rectangle acts as a barrier so that the low-energy eigenfunctions except the lowest one get localized as the eccentricity tends to infinity. This work was motivated by the numerical study of Filoche and Mayboroda \cite{bilaplacian} for bi-Laplacian \(\Delta^2\) on a rectangle with a point removed. This fourth order operator exhibits localized modes under the boundary conditions requiring the mode and its gradient to vanish at the boundary. 

In this paper, we provide an explicit set of eigenvalues and corresponding strength parameters of the point scatterer exhibiting the localization of eigenfunctions with the sharp estimate of the error term.  Interestingly, we observed that for low-energy eigenvalues and eigenfunctions there exists asymptotic correspondence between the \v{S}eba billiard and the one-dimensional Schr\"{o}dinger operator with a Dirac delta potential. Furthermore, the same result holds when we impose other boundary conditions such as Neumann, periodic, and Floquet boundary condition on the longer sides of the rectangle. In Section~\ref{sec:1d}, we review a one-dimensional model which turns out to be closely related to low-energy eigenfunctions of \v{S}eba billiards. In Section~\ref{sec:seba}, we consider the optimal estimate measuring the ratio of the localization induced by a point scatterer with an arbitrary parameter. Then in Section~\ref{sec:mainthm}, we present the main theorem by combining the results of two previous sections. In Section~\ref{sec:num}, we will give numerical results regarding the asymptotic behavior of the localization mentioned in the main theorem for the \v{S}eba billiard. Rigorous proofs of the main theorem and the intermediate steps are attached in Section~\ref{proofs}.

\section{One-dimensional model with a Dirac delta potential}\label{sec:1d}
Let \(0<x_0<a,~c\in\mathbb{R}\) and consider a Schr\"{o}dinger operator \(-\Delta-c\delta_{x_0}\) on \([0,a]\) with the Dirichlet boundary condition. We assume that \(\frac{x_0}{a}\) is irrational so that \(\sigma(-\Delta)\), the spectrum of the Dirichlet Laplacian, and \(\sigma(-\Delta-c\delta_{x_0})\) are disjoint. If \(\frac{x_0}{a}\) is rational, then there exist Laplacian eigenfunctions \(\sin\left(\frac{nx}{a}\right)\) vanishing at \(x_0\) for some integer \( n\ge 1\). Since these eigenfunctions do not feel the presence of the delta impurity, they remain as the eigenfunctions of \(-\Delta-c\delta_{x_0}\) with the same eigenvalues. If \(\frac{x_0}{a}\) is irrational, then \(\sin\left(\frac{nx}{a}\right)\) does not vanish at \(x_0\) for any integer \(n \ge 1\) so it cannot be the eigenfunction of \(-\Delta-c\delta_{\mathbf{x}_0}\). The eigenvalue has to change accordingly as well. This implies that \[\sigma(-\Delta) \cap \sigma(-\Delta-c\delta_{x_0}) \ne \emptyset ~\text{ if and only if }~ \frac{x_0}{a} \in \mathbb{Q}.\] In other words, \(\frac{x_0}{a}\) should be irrational in order to keep \(\sigma(-\Delta) \cap \sigma(-\Delta-c\delta_{x_0})=\emptyset\).

The eigenvalues \(z\in\sigma(-\Delta-c\delta_{x_0})\) are given as the solutions to
\begin{equation}\label{1dev}
\sqrt{z}\left(\cot \left(\sqrt{z}x_0\right) + \cot \left(\sqrt{z}(a-x_0)\right)\right)=c
\end{equation} having multiplicity 1 and the corresponding eigenfunctions \(\psi_z^{(1)}\in L^2([0,a])\)
\begin{equation}\label{psi1}
\psi_z^{(1)}(x)= \begin{cases}
N\left[ \cot (\sqrt{z}x_0)\sin (\sqrt{z}(x-x_0)) +\cos (\sqrt{z}(x-x_0))\right], &0<x<x_0\\ N\left[-\cot (\sqrt{z}(a-x_0))\sin (\sqrt{z}(x-x_0)) +\cos (\sqrt{z}(x-x_0))\right], &x_0<x<a\\
\end{cases}\end{equation}
where \(N>0\) is the normalization constant so that \(\|\psi_z^{(1)}\|_{L^2([0,a])}=1\). Note that the superscript \((1)\) of \(\psi_z^{(1)}\) indicates that the eigenfunction is from the one-dimensional model. For each \(c \in \mathbb{R}\), we denote the eigenvalues of \(-\Delta-c\delta_{x_0}\) by \(z_{1,c}<z_{2,c}<z_{3,c}<\cdots\). 

Now we consider the limits of eigenvalues and eigenfunction as \(c\rightarrow \infty\). By \eqref{1dev}, we obtain an increasing sequence \(z_{n,\infty}=\lim_{c\rightarrow\infty} z_{n,c},~n\ge 1\) defined as a union of two distinct sets, namely,
\begin{equation}\label{zinf}
\left\{z_{1,\infty}, z_{2,\infty}, z_{3,\infty}, \cdots \right\}= S_1 \cup S_2 
\end{equation}
where 
\begin{equation} \label{s1s2}
S_1 =\left\{ \left(\frac{m\pi}{x_0}\right)^2 ~\middle|~ m \ge 1\right\} ,\quad S_2=\left\{\left(\frac{m\pi}{a-x_0}\right)^2~ \middle| ~m\ge 1 \right\}.\end{equation}

\emph{Remark.}
For a fixed \(c\in\mathbb{R}\), \eqref{1dev} provides an increasing sequence of eigenvalues \(z_{n,c}>0,~n=1,2,\cdots.\). On the other hand, for any \(z\in (0,\infty)\setminus\{z_{n,\infty}~|~n\ge 1\}\), there exists a unique \(c\in\mathbb{R}\) also given by \eqref{1dev} such that \(z\in\sigma(-\Delta-c\delta_{x_0})\). See Fig.~\ref{1dgraph}.
In addition, since \(0<\frac{x_0}{a}<1\) is irrational, \(\left\{z_{n,\infty} ~|~ n\ge 1 \right\}\) interlaces with a sequence \(\left\{\left(\frac{n \pi}{a}\right)^2 ~\middle|~ n\ge 0 \right\}\), i.e.,\begin{equation}\label{1dinterlace}0<z_{1,\infty}<\frac{\pi^2}{a^2}<z_{2,\infty}<\frac{4\pi^2}{a^2}<z_{3,\infty}<\frac{9\pi^2}{a^2}<\cdots \end{equation}

\begin{figure} 
\centering
\includegraphics[width=0.8\columnwidth]{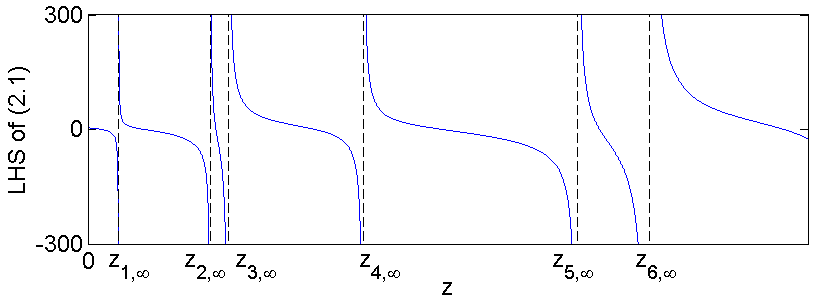}
\caption{A schematic graph of LHS of \eqref{1dev} as a function of \(z\). \(z_{n,\infty}\) in \eqref{zinf} are marked as dashed vertical lines.}
\label{1dgraph}
\end{figure}

As \(c\rightarrow \infty\), the eigenfunctions tend to localize either on \([0,x_0]\) or on \([x_0,a]\) in the \(L^2\)-sense. We now allow \(c\) to take the value \(+\infty\) and let \(\psi_{z_{n,\infty}}^{(1)}(x)=\lim_{c\rightarrow\infty}\psi_{z_{n,c}}^{(1)}(x)\). More precisely, if \(z_{n,\infty} \in S_1\), then
\[\psi_{z_{n,\infty}}^{(1)}(x)=\lim_{c\rightarrow\infty}\psi_{z_{n,c}}^{(1)}(x)=\begin{cases}
\sqrt{\frac{2}{x_0}} \sin \left(\sqrt{z_{n,\infty}}(x-x_0)\right), &0<x<x_0 \\
0, & x_0<x<a
\end{cases}\]
If \(z_{n,\infty} \in S_2\), then
\[\psi_{z_{n,\infty}}^{(1)}(x)=\lim_{c\rightarrow\infty}\psi_{z_{n,c}}^{(1)}(x)=\begin{cases}
0, &0<x<x_0 \\
\sqrt{\frac{2}{a-x_0}} \sin \left(\sqrt{z_{n,\infty}}(x-x_0)\right), & x_0<x<a
\end{cases}\]

Hence,
\begin{equation}\label{1dloc}
\frac{\|\psi_{z_{n,\infty}}^{(1)}\|_{L^2([0,x_0])}}{\|\psi_{z_{n,\infty}}^{(1)}\|_{L^2([0,a])}} = \begin{cases}
1, \mbox{ if } z_{n,\infty} \in S_1 \\ 
0, \mbox{ if } z_{n,\infty} \in S_2
\end{cases}
\end{equation}

On the other hand, for \(z>0\), \(\psi_z^{(1)}\) has the \(L^2\)-expansion:
\begin{equation}\label{psiz1fourier}
\psi_z^{(1)}(x) = M\sum_{n=1}^\infty c_n \sin \left( \frac{n\pi}{a}x \right),\quad c_n = \frac{\sin\left(\frac{n\pi}{a}x_0\right)}{\left(\frac{n\pi}{a}\right)^2-z}
\end{equation}
where \(M\) is the \(L^2\)-normalization constant. Note that if we assume that \(\frac{x_0}{a}\) and \(za^2\) are constant, then 
\begin{equation}\label{psi1norm}
M \propto a^{-5/2}
\end{equation}
since
\[1=\|\psi_z^{(1)}\|_{L^2[0,a]}^2=M^2 \sum_{n=1}^\infty \left(\frac{\sin\left(n\pi \dfrac{x_0}{a}\right)}{\dfrac{n\pi}{a^2}-\dfrac{c}{a^2}}\right)^2 \frac{a}{2}=C M^2a^5\]
for some constant \(C\). We will use \eqref{psi1norm} to prove Lemma~\ref{lemma}  in Section~\ref{proofs}.

\section{A point scatterer on a rectangle} \label{sec:seba}
\subsection{Boundary conditions}\label{boundary}
We consider a point scatterer with various boundary conditions for which the presence of a point scatterer induces the same type of localization of eigenfunctions as in \cite{minjae} so that the \v{S}eba billiard becomes one example of this generalization. 

More precisely, we continue imposing the Dirichlet boundary conditions on \([0,a]\times\{0,b\} \subset\partial\Omega\) and assume an arbitrary boundary condition on \(\{0,a\}\times[0,b]\subset\partial\Omega\) satisfying the following properties: For \(-\Delta\) on \(L^2(\Omega)\) with the boundary condition, the eigenvalues \(\lambda_1\le\lambda_2\le\lambda_3\le\cdots\) and the eigenfunctions \(\phi_1, \phi_2, \phi_3,\cdots \in L^2(\Omega)\) are equal to
\begin{equation}\label{lambdan}
\lambda_n=\lambda_{n_1,n_2}= \left(\frac{n_1 \pi}{a}\right)^2+ \frac{\nu_{n_2}}{b^2},\quad n\ge 1, ~n_1\ge 1, ~n_2\ge 1
\end{equation}
and
\begin{multline}\label{phin}
\phi_n(x,y) = \phi_{n_1,n_2}(x,y)=\sin\left(\frac{n_1 \pi}{a}x\right) g_{n_2}\left(\frac{y}{b}\right), \\ 0\le x \le a, ~ 0\le y \le b,~ n\ge 1, ~n_1\ge 1, ~n_2\ge 1
\end{multline}
for some \(0\le \nu_1 \le \nu_2  \le  \cdots \) and a set of nonzero orthogonal functions \(\{g_n~|~ n \ge 1\}\subset L^2([0,1])\) such that 
\begin{equation}\label{weyl}
\#\{\nu_n~|~\nu_n<\nu, ~n\ge 1\}=O(\sqrt{\nu}),~\nu\rightarrow\infty.
\end{equation}

For example, the Dirichlet, Neumann, periodic and the Floquet boundary conditions on \([0,a]\times\{0,b\}\subset\partial\Omega\) all satisfy these conditions. More precisely,
\[
\Delta u +\lambda u =0 \quad \text{ in } \Omega, \quad \lambda\in\mathbb{R},\quad u(0,y)=u(a,y)=0, ~ 0\le y \le b
\]
has the eigenvalues \(\lambda_n\) and the eigenfunctions \(\phi_n\) determined by \(\nu_n\) and \(g_n\) as in  \eqref{lambdan}, \eqref{phin} for each boundary condition on \([0,a]\times\{0,b\}\subset\partial\Omega\). 

\begin{enumerate}
\item Dirichlet: \( u(x,0)=u(x,b)=0, \quad 0\le x \le a \)
\[\nu_n=\left(n\pi\right)^2, ~g_n(y)=\sin(n\pi y) , ~n=1,2,3,\cdots \]
\item Neumann: \(\partial_y u(x,0)=\partial_y  u(x,b)=0, \quad 0\le x \le a \)
\[\tilde\nu_n=\left(n\pi\right)^2, ~\tilde g_n(y)=\begin{cases}\frac{1}{2}, ~&n=0\\ \cos(n\pi y),~ &n=1,2,\cdots \end{cases}\]
\item Periodic: \( u(x,0)=u(x,b), \quad 0\le x \le a\)
\[\tilde\nu_n=\left(2n\pi\right)^2, ~\tilde g_n(y)=e^{2i n\pi y} , ~ n\in\mathbb{Z}\]
\item Floquet: \( u(x,0)=e^{-i\theta}u(x,b), \quad 0\le x \le a\) for some \(\theta\in (-\pi,\pi)\)
\[ \tilde\nu_n=\left(2n\pi+\theta\right)^2, ~\tilde g_n(y)=e^{i (2n\pi+\theta) y} , ~ n\in\mathbb{Z}\]
\end{enumerate}
For (2),(3) and (4), we define \(\nu_n, g_n\) \( (n\ge 1)\) by rearranging \(\tilde \nu_n\) (\(n\ge 0 \) or \(n\in\mathbb{Z}\)) in the nondecreasing order with a new index \(n \ge 1 \) so that \eqref{weyl} holds. 


\FloatBarrier

\subsection{Spectral properties of a point scatterer}
\begin{figure}
\centering
\includegraphics[width=0.5\columnwidth]{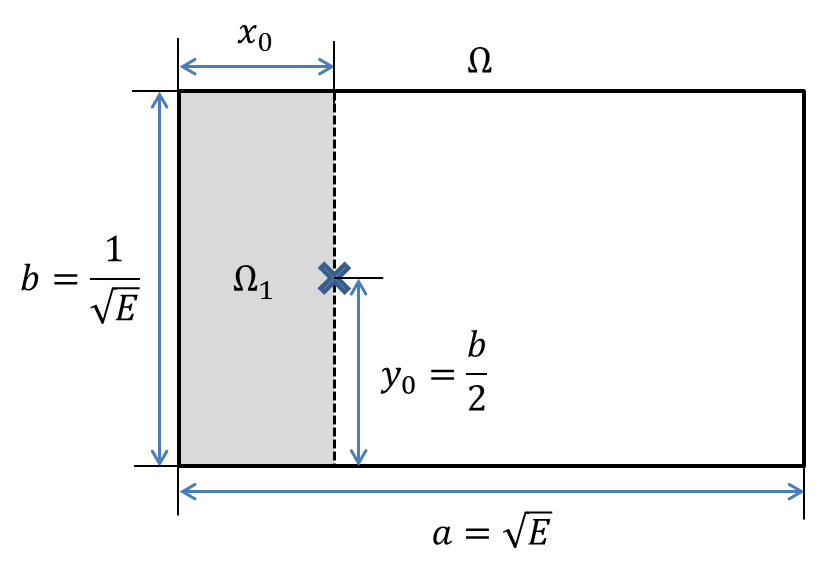}
\caption{Geometry of a point scatterer at \(\mathbf{x}_0=(x_0,y_0) \) (marked as \(\times\)) in \(\Omega\). The left part of the plate divided by \(\mathbf{x}_0\) is denoted by \(\Omega_1=[0,x_0]\times[0,b]\). }
\label{plate}
\end{figure}

 Now we construct a point scatterer at \(\mathbf{x}_0=(x_0,\frac{b}{2})\) on a rectangular plate \(\Omega=[0,a]\times[0,b]\) as in Fig.~\ref{plate} with the boundary conditions discussed in Section~\ref{boundary}. We continue assuming \(\frac{x_0}{a}\) to be irrational as in Section~\ref{sec:1d}.

 First, consider the Laplacian \(-\Delta\) on \(L^2(\Omega)\) with the boundary condition satisfying \eqref{lambdan}, \eqref{phin}, \eqref{weyl}. Then restrict its domain to the functions vanishing at \(\mathbf{x}_0\). By the theory of self-adjoint extension developed by Von Neumann, such a symmetric operator has a family of self-adjoint extensions \(-\Delta_{\mathbf{x}_0,\Omega,\alpha}\) with a parameter \(\alpha\in (-\infty,\infty]\). The parameter \(\alpha\) is following Albeverio's notation from \cite{solvable} as in our previous paper \cite{minjae}. Note that \(\alpha\) determines the strength of the point scatterer at \(\mathbf{x}_0\in\Omega\) although it is not equal to the coefficient \(c\in\mathbb{R}\) of the delta potential in \eqref{2ddelta}. More precisely, \(-\Delta+c\delta_{\mathbf{x}_0}\) needs renormalization to be properly defined as a self-adjoint operator on \(L^2(\Omega)\). Let \(G_z\) be the integral kernel of the resolvent \((-\Delta-z)^{-1}: L^2(\Omega)\rightarrow L^2(\Omega) \), namely,
\[G_z(\mathbf{x},\mathbf{x}')=\sum_{n=1}^\infty \frac{\phi_n(\mathbf{x})\overline{\phi_n(\mathbf{x}')}}{\lambda_n-z}\]
so that for \(f\in L^2(\Omega)\),
\[(-\Delta-z)^{-1}f(\mathbf{x}) = \int_\Omega G_z(\mathbf{x},\mathbf{x}') f(\mathbf{x}') d\mathbf{x}'.\]

Then the integral kernel of \(-\Delta +c\delta_{\mathbf{x}_0}, ~c\in\mathbb{R}\) formally reads
\[(-\Delta +c\delta_{\mathbf{x}_0}-z)^{-1}(\mathbf{x},\mathbf{x}')=G_z(\mathbf{x},\mathbf{x}') -\left[\frac{1}{c}-G_z(\mathbf{x}_0,\mathbf{x}_0)\right]^{-1} G_z(\mathbf{x}_0,\mathbf{x}')G_z(\mathbf{x},\mathbf{x}_0).\]
However, this approach fails if \(G_z(\mathbf{x}_0,\mathbf{x}_0)\) diverges which did not occur in the one-dimensional case. Therefore, we have to discard \(c\) and renormalize the delta potential with a new parameter \(\alpha\in (-\infty,\infty]\) to obtain the resolvent formula for the point scatterer \(-\Delta_{\mathbf{x}_0,\Omega,\alpha}\). For \(z \in \mathbb{C}\setminus \sigma (-\Delta_{\mathbf{x}_0,\Omega,\alpha})\), the integral kernel of \[(-\Delta_{\mathbf{x}_0,\Omega,\alpha}-z)^{-1}:L^2(\Omega)\rightarrow L^2(\Omega)\] is given by
\begin{multline}\label{resolvent}
(-\Delta_{\mathbf{x}_0,\Omega,\alpha}-z)^{-1}(\mathbf{x},\mathbf{x}') \\= G_z(\mathbf{x},\mathbf{x}')+ [\alpha - F(z)]^{-1} G_z(\mathbf{x}_0,\mathbf{x}')G_z(\mathbf{x},\mathbf{x}_0), \quad \alpha\in (-\infty,\infty]
\end{multline}
where
\begin{equation}\label{Fz}
F(z)= \sum_{n=1}^{\infty} |\phi_n(\mathbf{x}_0)|^2\left(\frac{1}{\lambda_n -z}-\frac{\lambda_n}{\lambda_n^2 +1}\right).
\end{equation}
is derived by applying Von Neumann's self-adjoint extension theory \cite{reedsimon2} to the Laplacian restricted to the smooth functions vanishing at \(\mathbf{x}_0\).

According to this parametrization, the point scatterer disappears in the norm resolvent sense of becoming \(-\Delta\) as \(\alpha \rightarrow \pm\infty\)  whereas it gets stronger when \(|\alpha|\ll \infty\). In particular, \(\alpha=\infty\) corresponds to the  Laplacian with the same boundary condition on \(\partial\Omega\) so we will only consider \(\alpha \in \mathbb{R}\) for which the presence of the point scatterer actually perturbs the system. We may also interpret \(\alpha\) as the inverse of the coupling constant \(v_B\) or \(\overline{v}_\theta\)  in Shigehara's work \cite{Shigehara1,Shigehara2}. For more details on the construction of \eqref{resolvent}, \eqref{Fz}, see \cite{solvable}.

\begin{figure} 
\centering
\includegraphics[width=0.6\columnwidth]{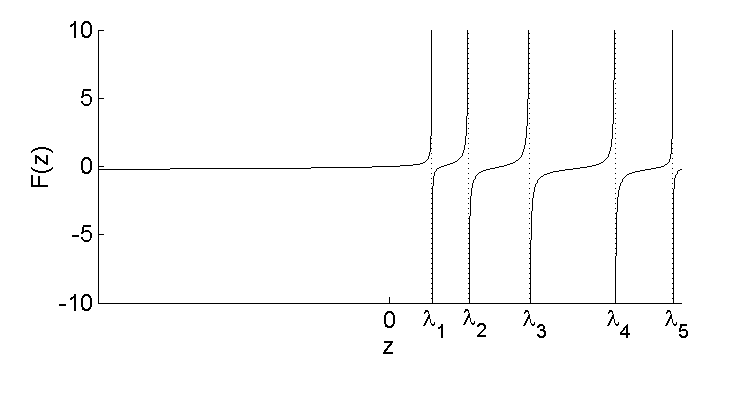}
\caption{A schematic graph of \(F(z)\) of \eqref{Fz}. \(\lambda_n\) in \eqref{lambdan} are marked as dotted vertical lines.}
\label{Fz2}
\end{figure}

Now consider the eigenvalues of the point scatterer with the boundary conditions in Section~\ref{boundary}. For simplicity let us omit \(\mathbf{x}_0\) and \(\Omega\) from \(-\Delta_{\mathbf{x}_0,\Omega,\alpha}\), namely,
\[-\Delta_\alpha=-\Delta_{\mathbf{x}_0,\Omega,\alpha}\] since those quantities are fixed. Also, we denote \(\mult(z,P)\) by the multiplicity of \(z\) as an eigenvalue of an operator \(P\). For \(\alpha\in\mathbb{R}\), we can divide \(\sigma(-\Delta_\alpha)\) into two types:
\begin{enumerate}
\item Perturbed eigenvalues: \(\sigma(-\Delta_{\alpha})\setminus\sigma(-\Delta) \) and
\item Unperturbed eigenvalues: \(\sigma(-\Delta_{\alpha})\cap\sigma(-\Delta)\)
\end{enumerate}
where each of them is obtained by different conditions. 
For \(\alpha\in\mathbb{R}\), \begin{equation}\label{alphaFz}
z\in\sigma(-\Delta_{\alpha})\setminus \sigma(-\Delta) \quad \text{ if and only if }\quad
\alpha=F(z)
\end{equation}
with \(F(z)\) defined in \eqref{Fz}. See Fig.~\ref{Fz2}.
Furthermore, \(\mult(z,-\Delta_\alpha)=1 \).

On the other hand, define \(\mu\) and \(\mu_0\) as 
\begin{align*}\mu(z) &\equiv \mult(z,-\Delta) = \#\{n\ge 1 ~|~ z=\lambda_n \} \\
\mu_0(z) &\equiv \#\{n\ge 1 ~|~ z=\lambda_n, \phi_n(\mathbf{x}_0)=0 \}. \end{align*}
Then for \(\alpha\in\mathbb{R}\), 
\begin{equation}\label{unpert}
z\in\sigma(-\Delta_\alpha) \cap \sigma(-\Delta) \quad \text{if and only if}\quad \mu_0(z)\ge 1 \text{ or } \mu(z)\ge 2.
\end{equation}
with the corresponding eigenspace 
\begin{equation}\label{unperteigftn}
\left\{ \sum_{z=\lambda_n} c_n \phi_n ~\middle|~ \sum_{z=\lambda_n} c_n \phi_n(\mathbf{x}_0)=0,~ c_n\in\mathbb{C} \right\}.
\end{equation}
By \eqref{unpert} and \eqref{unperteigftn}, we observe that the unperturbed eigenvalues and the corresponding eigenfunctions both are independent of \(\alpha\in\mathbb{R}\). The proofs of \eqref{unpert} and \eqref{unperteigftn} can be found in Chapter 2 of \cite{pseudolaplacian} with generalized statements for a point scatterer on a compact Riemannian manifold of dimension two or three.

 Let \(z_{1,\alpha}' \le  z_{2,\alpha}' \le z_{3,\alpha}' \le \cdots\) be the eigenvalues of \(-\Delta_\alpha\). Now we remark some properties of \(\sigma(-\Delta_{\alpha})\).

\emph{Remark.}
For any \(z\in \mathbb{R}\setminus\{\lambda_{n}~|~n\ge 1\}\), there exists a unique \(\alpha\in\mathbb{R}\) given by \eqref{alphaFz} such that \(z\in\sigma(-\Delta_\alpha)\).
For any \(\alpha\in \mathbb{R}\), \(\sigma(-\Delta_\alpha)\) interlaces with \(\sigma(-\Delta)\), namely,
\begin{equation}\label{interlace}z'_{1,\alpha} \le \lambda_1 \le z'_{2,\alpha} \le \lambda_2 \le \cdots .\end{equation}
In addition, for \(n \ge 1\), we have
\[\lim_{\alpha\rightarrow\infty}z'_{n,\alpha} = \lambda_n ,\qquad 
\lim_{\alpha\rightarrow -\infty}z'_{n+1,\alpha} = \lambda_n, \qquad 
\lim_{\alpha\rightarrow -\infty}z'_{1,\alpha} =-\infty.\]

 Let \(\psi_z^{(2)}\) be the eigenfunction associated with  \(z\in\sigma(-\Delta_\alpha)\setminus\sigma(-\Delta)\). Note that the superscript \((2)\) of \(\psi_z^{(2)}\) indicates that the eigenfunction is for the two-dimensional model.
Then we have \[\psi_z^{(2)}(\mathbf{x})= M G_z (x_0,y_0;~x,y)\] where 
\(G_z\) is the integral kernel of the resolvent operator \( (-\Delta-z)^{-1} : L^2 (\Omega)\rightarrow L^2(\Omega)\) and \(M>0\) is the \(L^2\)-normalization constant.
In addition, \(\psi_z^{(2)}\) has the \(L^2\)-expansion up to a multiplicative constant:
\begin{equation}\label{psiz2}
\psi_z^{(2)}= \sum_{n=1}^\infty \frac{\overline{\phi_n(x_0,y_0)} }{\lambda_n-z}\phi_n = \sum_{n_1=1}^\infty \sum_{n_2=1}^\infty \frac{\overline{\phi_{n_1,n_2}(x_0,y_0) }}{\lambda_{n_1,n_2}-z}\phi_{n_1,n_2} 
\end{equation}
Note that for \(\alpha\in \mathbb{R}\), any \(f\) in the domain of \(-\Delta_\alpha\)  has a logarithmic singularity  at \((x_0,y_0)\) or vanish at \((x_0,y_0)\). More precisely, as \((x,y)\rightarrow (x_0,y_0)\), 
\begin{equation}\label{logdiv}
f(x,y)= C_1 \left(\ln\sqrt{(x-x_0)^2+(y-y_0)^2} +C_2(\alpha) \right)+o(1)
\end{equation} for some \(C_1\in\mathbb{R}\) and \(C_2(\alpha)\in\mathbb{R}\) fixed for each \(\alpha\).
 Proofs can be found in Chapter 1 of \cite{pseudolaplacian} with a general statement for a complete and smooth Riemannian manifold of dimension two. This implies that \(\psi_z^{(2)}\) also contains the logarithmic singularity unless \(\psi_z^{(2)}(x,y)=o(1)\) as \((x,y)\rightarrow (x_0,y_0)\).

\emph{Remark.}
Without loss of generality, we may assume \(\mathrm{area}(\Omega)=ab=1\) or equivalently, \[\quad a=\sqrt{E}, \quad b=\frac{1}{\sqrt{E}}\] by scaling \(\Omega\), \(z_{n,\alpha}'\) and \(\psi_{z_{n,\alpha}'}^{(2)}\) simultaneously according to the following properties:
Consider point scatterers \(-\Delta_{\mathbf{x}_0,\Omega,\alpha}\) and \(-\Delta_{r\mathbf{x}_0,r\Omega, \alpha}\) on \(\Omega=[0,a]\times[0,b]\) and \(r\Omega=[0,ra]\times[0,rb]\), respectively, with a coupling constant \(\alpha\). Then 
\begin{equation}\label{scaling}z\in\sigma(-\Delta_{r\mathbf{x}_0,r\Omega,\alpha}) ~\text{ if and only if }~ r^2 z\in\sigma(-\Delta_{\mathbf{x}_0,\Omega,\alpha-\beta})\end{equation}
where \(\beta\) is a constant defined as
\[\beta=\sum_{n=1}^\infty \left(\frac{\lambda_n^2}{\lambda_n^2+1}-\frac{\lambda_n^2}{\lambda_n^2+r^4}\right).\]
In addition, let \(\tilde{\psi}_z^{(2)}\) and \(\psi_{r^2z}^{(2)}\) be the eigenfunctions corresponding to the eigenvalues \(z \in-\Delta_{r\Omega,\alpha}\) and \(r^2 z \in -\Delta_{\Omega, \alpha-\beta}\), respectively. Then for some \(C\ne 0\), we have \[\tilde{\psi}_z^{(2)}(x,y) = C \psi_{r^2z}^{(2)}\left(\frac{x}{r},\frac{y}{r}\right).\]

Now we estimate the localization of eigenfunctions over \(\Omega_1 = [0,x_0]\times [0,b]\) in the \(L^2\)-sense where the perturbed eigenvalue is bounded below by the lowest eigenvalue of the Laplacian and bounded above by \(\min \{\lambda_{1,n}~|~\lambda_{1,n}>\lambda_{1,1},~n\ge 1\}\).  See Section~\ref{proofs} for the proofs.

\begin{lemma}\label{lemma}
Define \(\lambda_{n_1,n_2},  \nu_{n_2}\) as in \eqref{lambdan} and let \(a=\sqrt{E},~ b=\frac{1}{\sqrt{E}}\). Let \begin{equation}\label{ntilde}
\tilde\lambda=\min \{\lambda_{1,n}~|~\lambda_{1,n}>\lambda_{1,1},~n\ge 1 \},\qquad \tilde{n}=\min\{n\ge 1~|~\lambda_{1,n}>\lambda_{1,1}\}.\end{equation}
 Let \(z\in (\lambda_{1,1},\tilde\lambda) \cap \sigma(-\Delta_\alpha) \setminus \sigma(-\Delta)\) and assume \(z=z(E)=\frac{c}{E}+\nu_1E\) for some constant \(c\in\mathbb{R}\). Then



\begin{equation}\label{eqn:lemma1}\frac{\|\psi_z^{(2)}\|_{L^2(\Omega_1)}}{\|\psi_z^{(2)}\|_{L^2(\Omega)}} = 
\frac{\|\psi_{z-\nu_1 E} ^{(1)}\|_{L^2([0,x_0])}}{\|\psi_{z-\nu_1 E} ^{(1)}\|_{L^2([0,a])}} + O\left(\frac{1}{E \sqrt{\frac{\pi^2}{E}+\tilde{n}^2 C_0 E -z}}\right)\end{equation}
and 
\begin{equation}\label{eqn:lemma2}\frac{\|\psi_z^{(2)}\|_{L^2(\Omega\setminus\Omega_1)}}{\|\psi_z^{(2)}\|_{L^2(\Omega)}} = 
\frac{\|\psi_{z-\nu_1 E} ^{(1)}\|_{L^2([x_0,a])}}{\|\psi_{z-\nu_1 E} ^{(1)}\|_{L^2([0,a])}} + O\left(\frac{1}{E \sqrt{\frac{\pi^2}{E}+\tilde{n}^2 C_0 E -z}}\right)\end{equation}
as \(E\rightarrow\infty\), where 
\begin{equation}\label{C0}
C_0=\sup_{n\ge1}\frac{\nu_n}{n^2},\qquad \end{equation}
with \(\psi_z^{(1)}\in L^2([0,\sqrt{E}])\) of \eqref{psi1}. Note that \(z\) and its bounds \(\lambda_{1,1}, \tilde{\lambda}\) all depend on \(E\). 
\end{lemma}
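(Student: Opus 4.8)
The plan is to exploit the product structure of the low-energy eigenfunctions: because the rectangle is thin, the transverse ($y$) direction is energetically expensive, so $\psi_z^{(2)}$ should be dominated by its lowest transverse mode, and on that mode the resolvent kernel in \eqref{psiz2} collapses onto exactly the one-dimensional object of Section~\ref{sec:1d}. I would therefore split \eqref{psiz2} according to the transverse index, writing $\psi_z^{(2)} = A + R$, where $A$ collects the terms with $\nu_{n_2}=\nu_1$ (equivalently $n_2<\tilde{n}$) and $R$ collects those with $\nu_{n_2}>\nu_1$ (i.e.\ $n_2\ge\tilde{n}$). Since $\mathbf{x}_0=(x_0,b/2)$ and $\phi_{n_1,n_2}(x,y)=\sin(\frac{n_1\pi}{a}x)g_{n_2}(y/b)$, in $A$ every denominator equals $(\frac{n_1\pi}{a})^2-(z-\nu_1 E)$, so by the Fourier expansion \eqref{psiz1fourier} the $x$-sum is a constant multiple of $\psi^{(1)}_{z-\nu_1 E}(x)$. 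Hence $A$ factorizes:
\[A(x,y) = \tfrac1M\,\psi^{(1)}_{z-\nu_1 E}(x)\,h(y),\qquad h(y) = \sum_{n_2 < \tilde{n}}\overline{g_{n_2}(1/2)}\,g_{n_2}(y/b).\]
Because the $y$-factor is common to $\Omega_1=[0,x_0]\times[0,b]$ and to $\Omega$, it cancels in the localization ratio, giving
\[\frac{\|A\|_{L^2(\Omega_1)}}{\|A\|_{L^2(\Omega)}} = \frac{\|\psi^{(1)}_{z-\nu_1 E}\|_{L^2([0,x_0])}}{\|\psi^{(1)}_{z-\nu_1 E}\|_{L^2([0,a])}},\]
which is exactly the claimed main term; the complementary identity over $\Omega\setminus\Omega_1$ yields \eqref{eqn:lemma2}.

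Next I would reduce the entire estimate to controlling $\|R\|/\|A\|$. Using $|\,\|A+R\|-\|A\|\,|\le\|R\|$ in both $L^2(\Omega_1)$ and $L^2(\Omega)$, together with $\|A\|_{L^2(\Omega_1)}\le\|A\|_{L^2(\Omega)}$, one obtains the elementary bound
\[\left|\frac{\|\psi_z^{(2)}\|_{L^2(\Omega_1)}}{\|\psi_z^{(2)}\|_{L^2(\Omega)}} - \frac{\|A\|_{L^2(\Omega_1)}}{\|A\|_{L^2(\Omega)}}\right|\le \frac{2\,\|R\|_{L^2(\Omega)}}{\|\psi_z^{(2)}\|_{L^2(\Omega)}}\lesssim\frac{\|R\|_{L^2(\Omega)}}{\|A\|_{L^2(\Omega)}},\]
the last step using $\|\psi_z^{(2)}\|_{L^2(\Omega)}\ge\|A\|_{L^2(\Omega)}-\|R\|_{L^2(\Omega)}$ once $R$ is shown to be lower order. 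It therefore suffices to prove $\|R\|_{L^2(\Omega)}/\|A\|_{L^2(\Omega)}=O\!\big((E\sqrt{G})^{-1}\big)$ with $G=\frac{\pi^2}{E}+\tilde{n}^2C_0E-z$.

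For the denominator, $\|A\|_{L^2(\Omega)}^2=M^{-2}\|h\|_{L^2([0,b])}^2$ since $\psi^{(1)}_{z-\nu_1 E}$ is $L^2([0,a])$-normalized; invoking $M\propto a^{-5/2}$ from \eqref{psi1norm} (legitimate because $x_0/a$ and $(z-\nu_1 E)a^2=c$ are held fixed here) gives $\|A\|_{L^2(\Omega)}^2\asymp E^2$. For the numerator I would apply Parseval in both variables: by orthogonality of the $g_{n_2}$ and of the $\sin(\frac{n_1\pi}{a}x)$,
\[\|R\|_{L^2(\Omega)}^2 = \frac{ab}{2}\sum_{n_2\ge\tilde{n}}|g_{n_2}(1/2)|^2\|g_{n_2}\|^2\sum_{n_1\ge1}\frac{\sin^2(\frac{n_1\pi}{a}x_0)}{\big((\frac{n_1\pi}{a})^2 + \nu_{n_2}E - z\big)^2}.\]
Bounding $\sin^2\le1$ and comparing the inner $n_1$-sum with $\int_0^\infty(t^2+\gamma)^{-2}\,dt$ gives the inner sum $\lesssim a\,\gamma_{n_2}^{-3/2}$, where $\gamma_{n_2}=\nu_{n_2}E-z>0$. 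The Weyl bound \eqref{weyl} then yields $\nu_{n_2}\gtrsim n_2^2$, hence $\gamma_{n_2}\gtrsim n_2^2E$ and $\sum_{n_2\ge\tilde{n}}\gamma_{n_2}^{-3/2}\lesssim E^{-3/2}\sum_{n_2\ge\tilde{n}}n_2^{-3}\lesssim \tilde{n}^{-2}E^{-3/2}$; combining this with the prefactor (using $ab=1$, which leaves a single factor $a=\sqrt{E}$) produces $\|R\|_{L^2(\Omega)}^2\lesssim(\tilde{n}^2E)^{-1}$. Finally, $C_0=\sup_n\nu_n/n^2$ together with $\nu_1\le C_0$ and $\tilde{n}\ge2$ makes $G\asymp\tilde{n}^2E$ in the regime $z=\frac{c}{E}+\nu_1E$, so $(\tilde{n}^2E)^{-1}\asymp G^{-1}$ and $\|R\|_{L^2(\Omega)}/\|A\|_{L^2(\Omega)}\lesssim(E\sqrt{G})^{-1}$, as required.

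The main obstacle is the uniform tail estimate of $\|R\|$: one must control the double series and, in particular, arrange the convergence factors $\nu_{n_2}\asymp n_2^2$ coming from \eqref{weyl} and \eqref{C0} to interact correctly with the spectral gap so that both the $E^{-3/2}$ decay rate and the precise $G$-dependence emerge. The integral comparison for the $n_1$-sum and the uniform boundedness of $|g_{n_2}(1/2)|^2\|g_{n_2}\|^2$ are the technical points that need to be pinned down carefully.
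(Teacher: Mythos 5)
Your proposal is correct and takes essentially the same route as the paper's proof: the identical split of \eqref{psiz2} at the transverse threshold \(\tilde n\) (your \(A,R\) are the paper's \(\psi_{\mathrm{low}},\psi_{\mathrm{high}}\)), the same factorization of the low part that reduces the main term to the one-dimensional ratio with \(\|A\|_{L^2(\Omega)}\asymp E\) via \eqref{psi1norm}, the same integral-comparison tail bound \(\|R\|_{L^2(\Omega)}^2=O\big((\tfrac{\pi^2}{E}+\tilde n^2C_0E-z)^{-1}\big)\), and the same triangle-inequality combination. Your only deviations---Parseval plus a per-transverse-mode one-dimensional integral with the Weyl bound \eqref{weyl}, instead of the paper's single two-dimensional polar integral, and a slightly tidier ratio-perturbation inequality---are cosmetic.
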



\section{Main Theorem} \label{sec:mainthm}
In this section, we provide a sequence of \(\alpha\)'s and eigenvalues inducing the localization of certain eigenfunctions by combining \eqref{1dloc} with Lemma~\ref{lemma}. 

\begin{theorem}\label{mainthm} 
Let \[N_E=\left[\sqrt{\frac{\tilde\nu-\nu_1}{\pi^2}E^2+1}\right]\] where \(\tilde\nu=\min\{\nu_n ~|~ \nu_n>\nu_1,~ n\ge 1\}\). Choose \(E\) sufficiently large so that \(N_E \ge \tilde{n}\). Define \(S_1\), \(S_2\) and \(z_{n,\infty}\) as in \eqref{s1s2} with \(a=\sqrt{E}\).
For \(n=\tilde n, \cdots, N_E\), there exists a unique parameter \(\alpha_n\) explicitly defined as
\begin{equation}\label{alphan}\alpha_n =\sum_{n'=1}^\infty |\phi_{n'}(x_0,y_0)|^2 \left(\frac{1}{\lambda_{n'}- \nu_1 E - z_{n-1,\infty}}-\frac{\lambda_{n'}}{\lambda_{n'}^2+1}\right).\end{equation}
such that
\[z_{n,\alpha_n}'=\nu_1 E + z_{n-1,\infty} \in \sigma(-\Delta_{\alpha_n}).\] In addition, as \(E\rightarrow \infty\),
\begin{equation}\label{l2ratio}
\begin{dcases}
\frac{\|\psi_{z_{n,\alpha_n}'}^{(2)}\|_{L^2(\Omega\setminus\Omega_1)}}{\|\psi_{z_{n,\alpha_n}'}^{(2)}\|_{L^2(\Omega)}}=
O\left(\frac{1}{\sqrt{E^3(\tilde{n}^2 C_0-\nu_1)-E \pi^2(n^2-1)}}\right),\quad &\mbox{if } z_{n,\infty}\in S_1 \\ 
\frac{\|\psi_{z_{n,\alpha_n}'}^{(2)}\|_{L^2(\Omega_1)}}{\|\psi_{z_{n,\alpha_n}'}^{(2)}\|_{L^2(\Omega)}}=O\left(\frac{1}{\sqrt{E^3(\tilde{n}^2 C_0-\nu_1)-E \pi^2(n^2-1)}}\right),\quad &\mbox{if } z_{n,\infty}\in S_2 \\ 
\end{dcases}\end{equation}
\end{theorem}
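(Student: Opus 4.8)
The plan is to reduce \eqref{l2ratio} to the one-dimensional localization \eqref{1dloc} through Lemma~\ref{lemma}, after first locating \(\nu_1 E + z_{n-1,\infty}\) correctly inside \(\sigma(-\Delta_{\alpha_n})\). The three ingredients are the eigenvalue criterion \eqref{alphaFz}, the interlacing \eqref{interlace}, and the band structure of the low-lying Laplacian spectrum.

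First I would check that \(\alpha_n\) is exactly the parameter producing this eigenvalue. Comparing \eqref{alphan} with \eqref{Fz} shows \(\alpha_n = F(\nu_1 E + z_{n-1,\infty})\) verbatim, so by \eqref{alphaFz} it suffices to know that \(\nu_1 E + z_{n-1,\infty} \notin \sigma(-\Delta)\), which keeps \(F\) finite there. Since \(x_0/a\) is irrational, \(z_{n-1,\infty} \in S_1 \cup S_2\) misses every \((k\pi/a)^2 = k^2\pi^2/E\), so \(\nu_1 E + z_{n-1,\infty}\) misses every \(\lambda_{n_1,n_2}\) with \(\nu_{n_2} = \nu_1\); and every \(\lambda_{n_1,n_2}\) with \(\nu_{n_2} > \nu_1\) is at least \(\tilde\lambda = \frac{\pi^2}{E} + \tilde\nu E\), which will be shown to exceed \(\nu_1 E + z_{n-1,\infty}\). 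Then \eqref{alphaFz} gives \(\nu_1 E + z_{n-1,\infty} \in \sigma(-\Delta_{\alpha_n}) \setminus \sigma(-\Delta)\) with multiplicity one, and uniqueness of \(\alpha_n\) follows from the strict monotonicity of \(F\) between consecutive poles.

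Next I would determine the index and the admissible window. For \(E\) large the eigenvalues below \(\tilde\lambda\) form the single band \(\lambda_{n_1,1} = \nu_1 E + n_1^2 \pi^2/E\), \(n_1 = 1,\dots,N_E\), since any mode with \(\nu_{n_2} > \nu_1\) sits at or above \(\tilde\lambda\). The interlacing \eqref{1dinterlace} places \(z_{n-1,\infty}\) strictly between \((n-1)^2\pi^2/E\) and \(n^2\pi^2/E\), so \(\nu_1 E + z_{n-1,\infty}\) lies strictly between the consecutive band levels \(\lambda_{n-1,1}\) and \(\lambda_{n,1}\); together with \eqref{interlace} this pins it down as the \(n\)-th eigenvalue \(z'_{n,\alpha_n}\). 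The cutoff \(n \le N_E\) forces \(\nu_1 E + z_{n-1,\infty} < \nu_1 E + n^2\pi^2/E \le \tilde\lambda\) (using \(N_E^2 \le \frac{\tilde\nu - \nu_1}{\pi^2}E^2 + 1\)), while \(n \ge \tilde n\) keeps it above \(\lambda_{1,1}\); hence \(z'_{n,\alpha_n} \in (\lambda_{1,1}, \tilde\lambda)\) and Lemma~\ref{lemma} is applicable. I expect this counting step to be the main obstacle: when \(\nu_1\) is degenerate, the unperturbed eigenvalues of \eqref{unpert}--\eqref{unperteigftn} interleave the band and shift the global index, so the clean correspondence \(z'_{n,\alpha_n} = \nu_1 E + z_{n-1,\infty}\) has to be read against the multiplicity of \(\nu_1\) (it is simple in the Dirichlet, Neumann, periodic, and Floquet cases, where \(\tilde n = 2\)). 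I would settle the simple case first and then track the multiplicity.

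It remains to feed this into Lemma~\ref{lemma}. Writing \(z = z'_{n,\alpha_n}\), we have \(z - \nu_1 E = z_{n-1,\infty}\), which for fixed \(n\) equals \(c/E\) for a constant \(c\) (as \(S_1, S_2\) scale like \(1/x_0^2, 1/(a-x_0)^2 \propto 1/E\)), so the hypothesis \(z = c/E + \nu_1 E\) holds. Since \(z_{n-1,\infty}\) is itself a limiting value, \eqref{1dloc} evaluates the one-dimensional ratio to \(1\) or \(0\) according as it lies in \(S_1\) or \(S_2\); substituting into \eqref{eqn:lemma1}--\eqref{eqn:lemma2} leaves the complementary \(L^2\)-mass equal to the error term, matching the two cases of \eqref{l2ratio}. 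Finally the error simplifies: inserting \(z < \nu_1 E + n^2\pi^2/E\) into \(\frac{\pi^2}{E} + \tilde n^2 C_0 E - z\) and multiplying through by \(E^2\) turns \(\frac{1}{E\sqrt{\cdot}}\) into \(\frac{1}{\sqrt{E^3(\tilde n^2 C_0 - \nu_1) - E\pi^2(n^2-1)}}\), with the radicand positive because \(C_0 \ge \tilde\nu/\tilde n^2 > \nu_1/\tilde n^2\) and \(n \le N_E\). This last part is routine once the index is fixed.
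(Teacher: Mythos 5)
Your proposal is correct and follows essentially the same route as the paper's own proof: identify \(\alpha_n = F(\nu_1 E + z_{n-1,\infty})\) via \eqref{alphaFz}, use the interlacings \eqref{1dinterlace} and \eqref{interlace} to pin \(z'_{n,\alpha_n}\) inside the window \((\lambda_{1,1},\tilde\lambda)\) required by Lemma~\ref{lemma}, then combine \eqref{1dloc} (which annihilates the one-dimensional term) with the bound \(z'_{n,\alpha_n} < \lambda_{n,1} = \nu_1 E + n^2\pi^2/E\) to rewrite the error as in \eqref{l2ratio}. Your additional checks --- that \(\nu_1 E + z_{n-1,\infty}\) avoids \(\sigma(-\Delta)\) by irrationality of \(x_0/a\), that the hypothesis \(z = c/E + \nu_1 E\) of Lemma~\ref{lemma} is satisfied, that the radicand is positive via \(C_0 \ge \tilde\nu/\tilde n^2\), and the caveat that degenerate \(\nu_1\) would shift the eigenvalue index --- are points the paper leaves implicit (its proof simply asserts \(\lambda_n = \lambda_{n,1}\) for \(n \le N_E\), which presumes \(\nu_1\) simple), so your write-up is, if anything, slightly more careful along the identical path.
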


Note that \(\tilde{n}\) in \eqref{ntilde} is equal to \(2\) if \(\nu_1 \ne\nu_2\), which is the generic case. Then \(N_E\ge \tilde{n}\) is chosen as the largest number satisfying \[\lambda_n=\lambda_{n,1}=\left(\frac{n\pi}{a}\right)^2+\frac{\nu_1}{b^2}\] and
\[\phi_n (x,y) = \sin\left(\frac{n\pi x}{a}\right) g_1 (y),\quad 0\le x \le a,~ 0\le y \le b\] for all \( n\le N_E\). Since \(\frac{\nu_1}{b^2}\) and \(g_1\) are fixed with respect to \(n\), we can say that \(N_E\) describes how many low-level modes of the 2D harmonic vibration on \(\Omega\) can be approximated by those of the 1D harmonic vibration on \([0,a]\) with some constant quantities such as \(\frac{\nu_1}{b^2}\) and \(g_1\). By Theorem~\ref{mainthm}, this idea can be extended to approximating a point scatterer on \(\Omega\) by a Schr\"{o}dinger operator with a delta potential on \([0,a]\) in which the localization of eigenfunction has been already exhibited. However, the lowest mode  of the point scatterer does not have a corresponding mode in the 1D model so \(n <\tilde{n}\) (or \(n=1\) in generic cases) should be excluded from the argument. See the proof in Section~\ref{proofs} for more details.  Note that localized eigenfunctions could appear sporadically above \(N_E\) as presented in \cite{minjae}. However, such cases will eventually disappear as \(n\rightarrow\infty\) since the high-energy eigenfunctions of \v{S}eba billiards tend to localize rather in the momentum space. See \cite{local} for the details.

\emph{Remark.}
In the asymptotic notation \eqref{l2ratio}, \(n\) may be a variable depending on \(E\). If \(n\ge 2 \) and \(n=o(E)\) as \(E \rightarrow \infty\), then \eqref{l2ratio} reads as \(E\rightarrow\infty\),
\begin{equation}\label{l2rationfixed}
\begin{dcases}\frac{\|\psi_{z_{n,\alpha_n}'}^{(2)}\|_{L^2(\Omega\setminus\Omega_1)}}{\|\psi_{z_{n,\alpha_n}'}^{(2)}\|_{L^2(\Omega)}}=
O\left(E^{-\frac{3}{2}}\right),\quad &\mbox{if } z_{n,\infty}\in S_1 \\ 
\frac{\|\psi_{z_{n,\alpha_n}'}^{(2)}\|_{L^2(\Omega_1)}}{\|\psi_{z_{n,\alpha_n}'}^{(2)}\|_{L^2(\Omega)}}=O\left(E^{-\frac{3}{2}}\right),\quad &\mbox{if } z_{n,\infty}\in S_2. \\ 
\end{dcases}\end{equation}

In addition, since \[0\le \frac{\|\psi_{z'_{n,\alpha_n}}^{(2)}\|_{L^2(\Omega_1)}}{\|\psi_{z'_{n,\alpha_n}}^{(2)}\|_{L^2(\Omega)}} \le 1,\] \eqref{l2ratio} actually implies that there exists a constant \(C>0\) and \(M>0\) such that for all \(E>M\),
\[\begin{dcases}
0 \le \frac{\|\psi_{z_{n,\alpha_n}'}^{(2)}\|_{L^2(\Omega\setminus\Omega_1)}}{\|\psi_{z_{n,\alpha_n}'}^{(2)}\|_{L^2(\Omega)}} < C\left(\frac{1}{\sqrt{E^3(\tilde{n}^2 C_0-\nu_1)-E \pi^2(n^2-1)}}\right),~ &\mbox{if } z_{n,\infty}\in S_1 \\
0 \le \frac{\|\psi_{z_{n,\alpha_n}'}^{(2)}\|_{L^2(\Omega_1)}}{\|\psi_{z_{n,\alpha_n}'}^{(2)}\|_{L^2(\Omega)}} < C\left(\frac{1}{\sqrt{E^3(\tilde{n}^2 C_0-\nu_1)-E \pi^2(n^2-1)}}\right),~ &\mbox{if } z_{n,\infty}\in S_2. \end{dcases}\]

\begin{figure} 
\centering
\includegraphics[width=0.8\columnwidth]{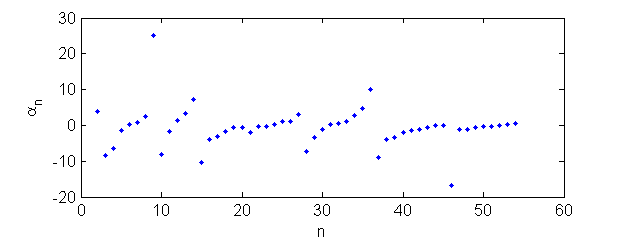}
\caption{Plot of \(\alpha_n\), \(2\le n \le N_E\) given by \eqref{alphan_dirichlet} where \({x_0}=\frac{a}{\pi},~ E=10\pi\) with the Dirichlet boundary condition on \(\partial\Omega\).}
\label{fig:alphan}
\end{figure}

In particular, we may apply Theorem~\ref{mainthm} to the examples mentioned in Section~\ref{boundary} and calculate \(\alpha_n\), \(z'_{n,\alpha_n}\) that generate the localization of modes as exhibited partly in \cite{minjae}. 
 Note that the Dirichlet boundary condition is given on \(\{0,a\}\times[0,b]\) for all cases.

\begin{example}[\v{S}eba billiards]
For the Dirichlet boundary condition on \(\partial\Omega\),
\begin{equation}\label{alphan_dirichlet}\alpha_n =\sum_{n_1=1}^\infty \sum_{\substack{n_2\ge 1\\n_2 \text{ odd}}} \sin^2\left(\frac{n_1 \pi }{\sqrt{E}}x_0\right)\left(\frac{1}{\lambda_{n_1,n_2}- \pi^2 E - z_{n,\infty}}-\frac{\lambda_{n_1,n_2}}{\lambda_{n_1,n_2}^2+1}\right) \end{equation}
and
\[z_{n,\alpha_n}'=\pi^2 E + z_{n-1,\infty} \in \sigma(-\Delta_{\alpha_n}),\quad 2 \le n \le \left[\sqrt{3E^2+1}\right]\]
where
\[\lambda_{n_1,n_2}=\frac{n_1^2\pi^2}{E}+n_2^2\pi^2 E, \quad  n_1\ge 1,~  n_2 \ge 1.\]
See Fig.~\ref{fig:alphan} for the plot of \(\alpha_n\) given by \eqref{alphan_dirichlet} .
\end{example}
\begin{example} 
For the Neumann boundary condition on \([0,a]\times\{0,b\}\subset\partial\Omega\),
\begin{multline}\label{alphan_neumann}\alpha_n =\sum_{n_1=1}^\infty \sin^2\left(\frac{n_1 \pi }{\sqrt{E}}x_0\right) \Biggl[\frac{1}{4} \left(\frac{1}{\tilde\lambda_{n_1,0} - z_{n,\infty}}-\frac{\tilde\lambda_{n_1,0}}{\tilde\lambda_{n_1,0}^2+1}\right)\\+\sum_{\substack{n_2\ge 1\\n_2 \text{ even}}} \left(\frac{1}{\tilde\lambda_{n_1,n_2} - z_{n,\infty}}-\frac{\tilde\lambda_{n_1,n_2}}{\tilde\lambda_{n_1,n_2}^2+1}\right) \Biggr]\end{multline} 
and
\[z_{n,\alpha_n}'= z_{n-1,\infty} \in \sigma(-\Delta_{\alpha_n}),\quad 2 \le n \le \left[\sqrt{E^2+1}\right]\]
where
\[\tilde\lambda_{n_1,n_2}=\frac{n_1^2\pi^2}{E}+n_2^2\pi^2 E, \quad n_1\ge 1,~ n_2 \ge 0.\]
\end{example}

\begin{example} For the periodic boundary condition on \([0,a]\times\{0,b\}\subset\partial\Omega\),
\begin{equation}\label{alphan_periodic}\alpha_n =\sum_{n_1=1}^\infty \sum_{n_2\in \mathbb{Z}}\sin^2\left(\frac{n_1 \pi }{\sqrt{E}}x_0\right) \left(\frac{1}{\tilde\lambda_{n_1,n_2}- z_{n,\infty}}-\frac{\tilde\lambda_{n_1,n_2}}{\tilde\lambda_{n_1,n_2}^2+1}\right)\end{equation}
and
\[z_{n,\alpha_n}'= z_{n-1,\infty} \in \sigma(-\Delta_{\alpha_n}),\quad 2 \le n \le \left[\sqrt{4E^2+1}\right]\]
where
\[\tilde\lambda_{n_1,n_2}=\frac{n_1^2\pi^2}{E}+4n_2^2\pi^2 E, \quad n_1\ge 1,~ n_2 \in\mathbb{Z}.\]
\end{example}

\begin{example} For the Floquet boundary condition on \([0,a]\times\{0,b\}\subset\partial\Omega\) with some \(\theta\in (-\pi,\pi)\),
\begin{equation}\label{alphan_floquet}\alpha_n =\sum_{n_1=1}^\infty \sum_{n_2\in \mathbb{Z}}\sin^2\left(\frac{n_1 \pi }{\sqrt{E}}x_0\right) \left(\frac{1}{\tilde\lambda_{n'}- \theta^2 E - z_{n,\infty}}-\frac{\tilde\lambda_{n'}}{\tilde\lambda_{n'}^2+1}\right)\end{equation}
and
\[z_{n,\alpha_n}'=\theta^2 E + z_{n-1,\infty} \in \sigma(-\Delta_{\alpha_n}),\quad 2 \le n \le \left[\sqrt{4E^2\left(1-\frac{|\theta|}{\pi}\right)+1}\right]\]
where
\[\tilde\lambda_{n_1,n_2}=\frac{n_1^2\pi^2}{E}+(2\pi n_2+\theta)^2 E, \quad n_1\ge 1,~ n_2 \in\mathbb{Z}.\]

\end{example}

\FloatBarrier
\section{Numerical Results for \v{S}eba billiards}\label{sec:num}
In \cite{minjae}, we numerically exhibited several kinds of localized eigenfunctions of the \v{Seba} billiard without specifying \(\alpha_n\), the parameters localizing the \(n\)-th mode up to an error of \(O(E^{-\frac{3}{2}})\) to one side of the rectangle. We now numerically check how accurately the localization occurs for \(-\Delta_{\alpha_n}\) for \(E\) large as expected by Theorem~\ref{mainthm}. For the sake of convenience, let the \(x\)-coordinate of the point scatterer be fixed at \(x_0=\frac{a}{\pi}\). Note that, however, the qualitative properties we observe also hold for other values of \({x_0}\).

First, consider a \v{S}eba billiard with a fixed eccentricity \(E=10\pi\). Although \(\alpha\in\mathbb{R}\) is the variable to be considered for \(-\Delta_\alpha\), we can let the eigenvalue \(z \in \sigma(-\Delta_\alpha)\) itself be an independent variable in \( \mathbb{R}\setminus \{\lambda_n ~|~ n\ge 1\}\) and let \(\alpha\in\mathbb{R}\) depend on \(z\) since for each \(z\), there exists a unique parameter \(\alpha\in\mathbb{R}\) defined by \eqref{alphaFz} such that \(z\in \sigma(-\Delta_\alpha)\). In this point of view, Fig.~\ref{PR_z} shows the \(L^2\)-norm of the eigenfunction \(\psi_z^{(2)}\) over \(\Omega_1\) as a function of \(z\). Note that we assumed \(\|\psi_z^{(2)}\|_{L^2 (\Omega)}=1\) so \(\|\psi_z^{(2)}\|_{L^2 (\Omega_1)}=1 \) and \(\|\psi_z^{(2)}\|_{L^2 (\Omega_1)}=0 \) mean that \(\psi_z^{(2)}\) gets completely localized in \(\Omega_1\) and \(\Omega\setminus \Omega_1\), respectively. The dashed and dotted lines indicate the sets of optimal eigenvalues \(S_1+\pi^2E\) and \(S_2+\pi^2E\) exhibiting the localization in \(\Omega_1\) and \(\Omega\setminus\Omega_1\) expected by Theorem~\ref{mainthm}. One can observe that those lines in Fig.~\ref{PR_z} thoroughly estimate the \(n\)-th lowest eigenvalues (\(n=2,\cdots,N_E\)) of the localized eigenfunctions. For example, Fig.~\ref{mode3} shows the partial sum of the first \(10^6\) terms of \eqref{psiz2} where \(z=z'_{3,\alpha_3}=\pi^2 E + z_{2,\infty}\). Since \(z_{2,\infty}=\left(\frac{\pi}{x_0}\right)^2\in S_1\), the eigenfunction tends to localize on \(\Omega_1\). A small peak at \(\left(x_0,\frac{b}{2}\right)\) indicates the logarithmic divergence of \(\psi_z^{(2)}\) mentioned in \eqref{logdiv}. This phenomenon was not emphasized in the figures of localized modes in \cite{minjae}.

\begin{figure}
\centering
\includegraphics[width=0.8\columnwidth]{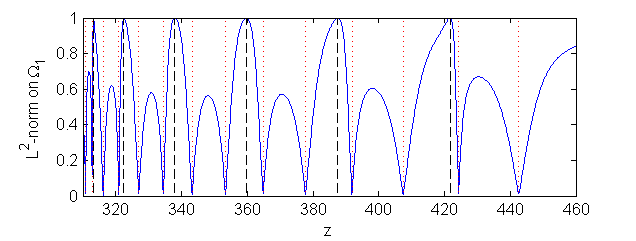}
\caption{(\v{S}eba billiard) \(\|\psi_z^{(2)}\|_{L^2(\Omega_1)}\) as a function of \(z\) where \({x_0}=\frac{a}{\pi},~ E=10\pi\). 
The dashed and dotted vertical lines indicate sets of eigenvalues: \(S_1+\pi^2E\) and \(S_2+\pi^2E\), respectively, in which the localization expected by Theorem~\ref{mainthm} occurs.}
\label{PR_z}
\end{figure}

\begin{figure}
\centering
\includegraphics[width=0.6\columnwidth]{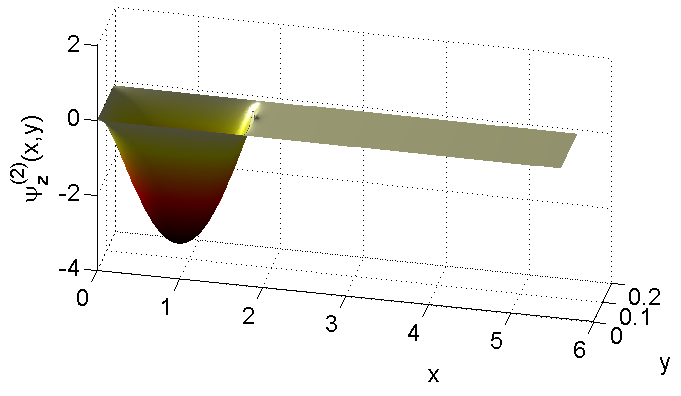}
\caption{(\v{S}eba billiard) Plot of \(\psi_z^{(2)}\) computed as the partial sum of the first \(10^6\) terms of \eqref{psiz2} where \(E=10\pi,~ x_0=\frac{a}{\pi}\approx1.78,~z=z'_{3,\alpha_3}\approx3.13\cdot 10^2\).}
\label{mode3}
\end{figure}

Second, we introduce numerical examples verifying how strongly the eigenfunctions \(\psi_{z'_{n,\alpha_n}}^{(2)}\) localize in terms of the asymptotic estimate in \eqref{l2ratio} as \(E\rightarrow \infty\). For \(E>0\) and for \(n=2,\cdots,N_E\), consider an operator \(-\Delta_{\mathbf{x}_0,\Omega, \alpha_n}\) and its \(n\)-th lowest eigenvalue \(z'_{n,\alpha_n}\) where \(\mathbf{x}_0=(x_0,\frac{b}{2})\), \(\Omega=[0,E^{\frac{1}{2}}]\times [0,E^{-\frac{1}{2}}]\) and \(\alpha_n\) is given by \eqref{alphan}. We choose \(x_0=0.3 \frac{a}{\pi},~0.7 \frac{a}{\pi},~1.1 \frac{a}{\pi},~1.5 \frac{a}{\pi}\) to observe how the location of a point scatterer affects localization of eigenfunctions. (See Fig.~\ref{fig:degpert}.)  Define \(\epsilon_{n,E}\) as 
\[\epsilon_{n,E}=\begin{dcases}\frac{\|\psi_{z_{n,\alpha_n}'}^{(2)}\|_{L^2(\Omega\setminus\Omega_1)}}{\|\psi_{z_{n,\alpha_n}'}^{(2)}\|_{L^2(\Omega)}}, &\mbox{ if } z_{n,\infty}\in S_1
\\
\frac{\|\psi_{z_{n,\alpha_n}'}^{(2)}\|_{L^2(\Omega_1)}}{\|\psi_{z_{n,\alpha_n}'}^{(2)}\|_{L^2(\Omega)}}, &\mbox{ if } z_{n,\infty}\in S_2
\end{dcases}\]
and define the \textit{rate of localization} as a number \(k\) such that \(\epsilon_{n,E}=O(E^k)\) as \(E\rightarrow\infty\).  Note that \(k<0\) for some \(n \ge 2\) implies that the \(L^2\)-norm fraction of the localized eigenfunction \(\psi_{z'_n,\alpha_n}^{(2)}\) on the unlocalized region (e.g., the right part of Fig.~\ref{mode3}) decays as \(E\) increases. In addition, \(|k|\) measures how fast the unlocalized fraction of \(\psi_{z'_n,\alpha_n}^{(2)}\) diminishes as \(E\rightarrow\infty\) as long as \(k<0\).

\begin{figure}
\centering
\begin{subfigure}{0.49\columnwidth}
\centering
\includegraphics[width=\textwidth]{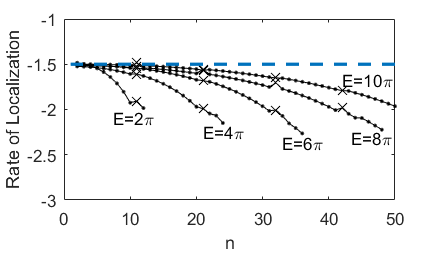} 
\caption{\(x_0=0.3 \frac{a}{\pi}\approx 0.10 \cdot a\)}
\end{subfigure}
\hfill
\begin{subfigure}{0.49\columnwidth}
\centering
\includegraphics[width=\textwidth]{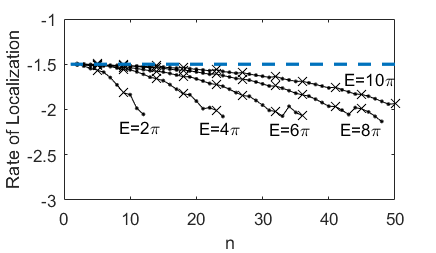} 
\caption{\(x_0=0.7\frac{a}{\pi} \approx 0.22\cdot a\)}
\end{subfigure}

\begin{subfigure}{0.49\columnwidth}
\centering
\includegraphics[width=\textwidth]{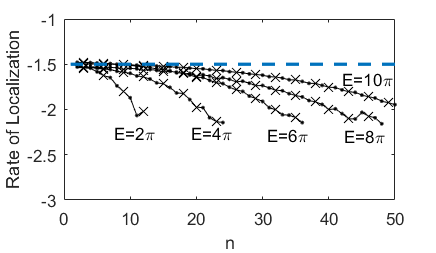} 
\caption{\(x_0=1.1\frac{a}{\pi}\approx 0.35\cdot a\)}
\end{subfigure}
\hfill
\begin{subfigure}{0.49\columnwidth}
\centering
\includegraphics[width=\textwidth]{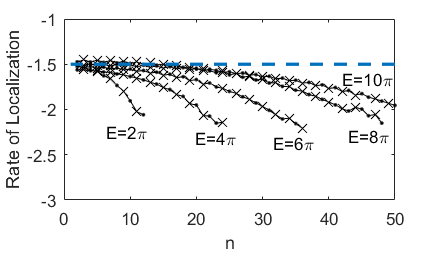} 
\caption{\(x_0=1.5\frac{a}{\pi} \approx 0.48\cdot a\)}
\label{degloc4}
\end{subfigure}

\caption{(\v{S}eba billiard) Rate of localization of eigenfunctions \(\psi_{z'_{n,\alpha_n}}^{(2)},~(2\le n\le \sqrt{3E^2+1})\) approximated with data near \(E=2\pi,4\pi,6\pi,8\pi,10\pi \) with various \(x_0\).  Eigenfunctions localizing in \(\Omega_1\) and \(\Omega\setminus\Omega_1\) were marked as \(\times\) and \(\bullet\), respectively. The dashed lines correspond to the theoretical bound \(k=-1.5\) as \(E\rightarrow\infty\) given by Theorem~\ref{mainthm}} 
\label{fig:degpert}
\end{figure}

The numerical results indicate that \(k \le -\frac{3}{2}\) for any \(n\ge 2\) as predicted by Theorem~\ref{mainthm}.  Fig.~\ref{fig:degpert} shows the rate of localization of \(\psi_{z'_{n,\alpha_n}}^{(2)}\) as a function of \(n\) with various \(x_0\). Data points for eigenfunctions localizing in \(\Omega_1\) and \(\Omega\setminus\Omega_1\) were marked as \(\times\) and \(\bullet\), respectively. According to \eqref{l2rationfixed}, all points in Fig.~\ref{fig:degpert} should lie on or below the horizontal line of \(-\frac{3}{2}\) no matter where the localization occurs in the rectangle. In particular, one can numerically observe that the rate of localization converges to the theoretical bound marked as the dashed line as \(E\rightarrow \infty\)

Note that keeping the irrational ratio between \(x_0\) and \(a\) as assumed in Section~\ref{sec:1d} is crucial to localize all eigenfunctions as \(E\rightarrow \infty\). For example, placing a point scatterer at the midpoint \(x_0=\frac{a}{2}\) does not induce localization of eigenfunctions since \(\Omega_1\) and \(\Omega\setminus\Omega_1\)  are symmetric to each other. However, one can observe that in Fig.~\ref{degloc4} all eigenfunctions localize within a controlled error better than \(O(E^{-\frac{3}{2}})\) although \(x_0=1.5 \frac{a}{\pi}\approx 0.48\cdot a\)  is close to the midpoint of the rectangle. This implies that even a slight change of \(x_0\) can induce a drastic change of eigenfunctions in terms of the localization ratio.

\section{Proofs}\label{proofs}
\begin{proof}[Proof of Lemma~\ref{lemma}]
We rewrite \eqref{psiz2} using \eqref{psiz1fourier} as
\begin{align*}\psi_z^{(2)}(x,y)&= \sum_{n_1=1}^\infty \sum_{n_2=1}^\infty \frac{\overline{\phi_{n_1,n_2}(x_0,y_0) }}{\frac{\pi^2 n_1^2}{E}+\nu_{n_2}E -z}\phi_{n_1,n_2} (x,y)= \psi_{\mathrm{low}}(x,y)+ \psi_{\mathrm{high}}(x,y)
\end{align*}
where
\[\psi_{\mathrm{low}}(x,y)=\sum_{n_1=1}^\infty \sum_{n_2=1}^{\tilde{n}-1} \frac{\overline{\phi_{n_1,n_2}(x_0,y_0) }}{\frac{\pi^2 n_1^2}{E}+\nu_{n_2}E -z}\phi_{n_1,n_2} (x,y),\]
\[\psi_{\mathrm{high}}(x,y)=\sum_{n_1= 1}^\infty \sum_{n_2=\tilde{n}}^\infty \frac{\overline{\phi_{n_1,n_2}(x_0,y_0) }}{\frac{\pi^2 n_1^2}{E}+\nu_{n_2}E -z}\phi_{n_1,n_2} (x,y).\]
By \eqref{phin}, \(\phi_{n_1,n_2}\) can be rewritten as
\[\phi_{n_1,n_2}(x,y)=\sin\left(\frac{n_1 \pi x}{\sqrt{E}}\right) g_{n_2}(\sqrt{E}y).\]
In addition, by \eqref{psiz1fourier},
\begin{align*}\psi_{\mathrm{low}}(x,y)&=\sum_{n_1=1}^\infty \sum_{n_2=1}^{\tilde{n}-1} \frac{\overline{\phi_{n_1,n_2}(x_0,y_0) }}{\frac{\pi^2 n_1^2}{E}+\nu_{n_2}E -z}\phi_{n_1,n_2} (x,y)\\&=\sum_{n_2=1}^{\tilde{n}-1}\overline{g_{n_2}\left(\sqrt{E}y_0\right)}\frac{1}{M}\psi_{z-\nu_{n_2} E}^{(1)}(x) g_{n_2}\left(\sqrt{E}y\right)\\
&= \frac{1}{M}\psi_{z-\nu_{1} E}^{(1)}(x) \left( \sum_{n_2=1}^{\tilde{n}-1}\overline{g_{n_2}\left(\sqrt{E}y_0\right)} g_{n_2}\left(\sqrt{E}y\right)\right) .\end{align*}

 Note that \(\sqrt{E}y_0 = \frac{y_0}{b}\) is a fixed quantity for all \(E>0\) and  \(g_1,g_2\cdots\) given in \eqref{phin} are orthogonal.  By applying \eqref{psi1norm} to \(\psi_{z-\nu_1 E}^{(1)}\),  we obtain \(\|\psi_{\mathrm{low}}\|_{L^2(\Omega)} = C E\) for some constant \(C\ge 0\) since  \(\|\psi_{z-\nu_1 E}^{(1)}\|_{L^2([0,a])}=1\) and 
\begin{equation}\label{psilow}\begin{aligned}\|\psi_{\mathrm{low}} \|_{L^2(\Omega)}^2 &=  \frac{1}{M^2}\|\psi_{z-\nu_1 E}^{(1)}\|_{L^2([0,a])}^2\sum_{n_2=1}^{\tilde{n}-1}\left|g_{n_2}(\sqrt{E}y_0)\right|^2 \left\|g_{n_2}(\sqrt{E}~\bullet)\right\|_{L^2([0,b])}^2\\
&= cE^{\frac{5}{2}} \sum_{n_2=1}^{\tilde{n}-1} \left|g_{n_2}(\sqrt{E}y_0)\right|^2 \left(\|g_{n_2}\|_{L^2([0,1])}^2E^{-\frac{1}{2}}\right)  \quad \text{for some } c
\\&=C^2E^2 \quad \text{for some } C\ge 0.
\end{aligned}\end{equation}

On the other hand, by \eqref{weyl} and \(z <  \lambda_{1,2}\), there exists \(C>0\) such that
\begin{align*}&\|\psi_{\mathrm{high}}\|_{L^2(\Omega)}^2 = \sum_{n_1=1}^\infty \sum_{n_2=\tilde{n}}^\infty \left(\frac{|\phi_{n_1,n_2}(x_0,y_0) |}{\frac{\pi^2 n_1^2}{E}+\nu_{n_2}E -z}\right)^2 \|\phi_{n_1,n_2}\|_{L^2(\Omega)}^2\\&\le C\sum_{n_1=1}^\infty \sum_{n_2=\tilde{n}}^\infty \left(\frac{|\phi_{n_1,n_2}(x_0,y_0) |}{\frac{\pi^2 n_1^2}{E}+n_2^2 C_0 E -z}\right)^2 < C \int_{\frac{\pi}{\sqrt{E}}}^\infty \int_{\tilde{n}\sqrt{C_0 E}}^\infty \left(\left(\xi^2+\eta^2\right)-z\right)^{-2} d\eta d\xi \\&< C \int_0^{\frac{\pi}{2}}\int_{\sqrt{\frac{\pi^2}{E}+\tilde{n}^2 C_0 E }}^\infty \frac{r}{\left( r^2-z\right)^2} dr d\theta = C \frac{1}{\frac{\pi^2}{E}+\tilde{n}^2C_0 E-z}\end{align*}
where \(\tilde{n}, C_0\) are given by \eqref{ntilde} and \eqref{C0}. Therefore,
\begin{equation}\label{psihigh}\|\psi_{\mathrm{high}}\|_{L^2(\Omega)} < C \frac{1}{\sqrt{\frac{\pi^2}{E}+\tilde{n}^2C_0 E-z}}.\end{equation}
By \eqref{psilow}, \eqref{psihigh}, we obtain
\begin{align*}\frac{\|\psi_z^{(2)}\|_{L^2(\Omega_1)}}{\|\psi_z^{(2)}\|_{L^2(\Omega)}}&\le \frac{\|\psi_{\mathrm{low}}\|_{L^2(\Omega_1)} +\|\psi_{\mathrm{high}}\|_{L^2(\Omega_1)}}{\|\psi_{\mathrm{low}}\|_{L^2(\Omega)}-\|\psi_{\mathrm{high}}\|_{L^2(\Omega)}} \\&\le
\frac{\|\psi_{\mathrm{low}}\|_{L^2(\Omega_1)}}{\|\psi_{\mathrm{low}}\|_{L^2(\Omega)}} + C\left(\frac{\|\psi_{\mathrm{high}}\|_{L^2(\Omega_1)}}{\|\psi_{\mathrm{low}}\|_{L^2(\Omega)}}\right)\\&\le
\frac{\|\psi_{\mathrm{low}}\|_{L^2(\Omega_1)}}{\|\psi_{\mathrm{low}}\|_{L^2(\Omega)}} + C\left(\frac{1}{E \sqrt{\frac{\pi^2}{E}+\tilde{n}^2 C_0 E -z}}\right)\\&=\frac{\|\psi_{z-\nu_1 E}^{(1)}\|_{L^2([0,x_0])}}{\|\psi_{z-\nu_1 E}^{(1)}\|_{L^2([0,a])}}+ C\left(\frac{1}{E \sqrt{\frac{\pi^2}{E}+\tilde{n}^2 C_0 E -z}}\right)\end{align*}
and similarly,
\begin{align*}\frac{\|\psi_z^{(2)}\|_{L^2(\Omega_1)}}{\|\psi_z^{(2)}\|_{L^2(\Omega)}}&\ge \frac{\|\psi_{z-\nu_1E}^{(1)}\|_{L^2([0,x_0])}}{\|\psi_{z-\nu_1 E}^{(1)}\|_{L^2([0,a])}}- C\left(\frac{1}{E \sqrt{\frac{\pi^2}{E}+\tilde{n}^2 C_0 E -z}}\right)\end{align*}
for some \(C>0\) as \(E\rightarrow \infty\). This concludes the proof of \eqref{eqn:lemma1}. Similarly, we can prove \eqref{eqn:lemma2} by switching \(\Omega_1\) and \(\Omega\setminus\Omega_1\).
\end{proof}

\begin{proof}[Proof of Theorem~\ref{mainthm}]
Note that \(a=\sqrt{E}\) and \(b=\frac{1}{\sqrt{E}}\) since \(\mathrm{area}(\Omega)=1\). Then we have \(\lambda_n = \lambda_{n,1} \) for all \(n \le N_E\). In addition, \eqref{interlace} implies that for \(n=\tilde{n},\cdots, N_E\) and for any \(\alpha\in \mathbb{R}\), \(z_{n,\alpha}'\) satisfies
\[\lambda_{1,1}= \lambda_{1} \le z'_{n,\alpha} \le \lambda_{N_E} =\lambda_{N_E,1}<\lambda_{1,2}\] 
so we can apply Lemma~\ref{lemma} to \(z'_{n,\alpha} \in \sigma(-\Delta_\alpha)\setminus\sigma(-\Delta)\). Furthermore, for each \(n\), we may choose a specific \(\alpha=\alpha_n\) so that the localization of \(\psi_{z'_{n,\alpha_n}}^{(1)}\) is maximized. More preciesly, consider the 1-dimensional model discussed in Section~\ref{sec:1d} with \(a=\sqrt{E}\). By \eqref{1dinterlace}, we have
\[\lambda_{1,1} < z_{1,\infty} + \nu_1 E< \lambda_{2,1} < z_{2,\infty}+ \nu_1 E < \lambda_{3,1} <z_{3,\infty}+ \nu_1 E < \cdots \]
Hence, for each \(n=\tilde n,\cdots, N_E\), there exists a unique \(\alpha_n\in \mathbb{R}\) given by \eqref{alphan} such that 
\[z'_{n,\alpha_n}= z_{n-1,\infty}+ \nu_1 E \in (S_1\cup S_2) + \nu_1 E\] and by \eqref{1dloc},
\[\begin{dcases}\frac{\|\psi_{z'_{n,\alpha_n}-\nu_1E}^{(1)}\|_{L^2([x_0,a])}}{\|\psi_{z'_{n,\alpha_n}-\nu_1E}^{(1)}\|_{L^2([0,a])}} =0 &\mbox{ if } z'_{n,\alpha_n}-\nu_1E \in S_1\\
\frac{\|\psi_{z'_{n,\alpha_n}-\nu_1E}^{(1)}\|_{L^2([0,x_0])}}{\|\psi_{z'_{n,\alpha_n}-\nu_1E}^{(1)}\|_{L^2([0,a])}} =0 &\mbox{ if } z'_{n,\alpha_n}-\nu_1E \in S_2\end{dcases}\]

Note that 
\(\lambda_{n-1}<z'_{n,\alpha_n} < \lambda_n, \quad \tilde{n}\le n \le N_E\)
 and \[\lambda_{n-1} = \frac{\pi^2 (n-1)^2}{E}+\nu_1 E,\quad \lambda_n = \frac{\pi^2 n^2}{E}+\nu_1 E.\]
Therefore, we can apply Lemma~\ref{lemma} as follows:  If \(z'_{n,\alpha_n}-\nu_1E \in S_2\), as \(E\rightarrow \infty\),
\begin{equation*}
\begin{aligned}\frac{\|\psi_{z'_{n,\alpha_n}}^{(2)}\|_{L^2(\Omega_1)}}{\|\psi_{z'_{n,\alpha_n}}^{(2)}\|_{L^2(\Omega)}} &= 
\frac{\|\psi_{{z'_{n,\alpha_n}}-\nu_1 E}^{(1)}\|_{L^2([0,x_0])}}{\|\psi_{{z'_{n,\alpha_n}}-\nu_1 E}^{(1)}\|_{L^2([0,a])}} + O\left(\frac{1}{E \sqrt{\frac{\pi^2}{E}+\tilde{n}^2 C_0 E -z'_{n,\alpha_n}}}\right)\\&=0 + O\left(\frac{1}{\sqrt{E^3(\tilde{n}^2 C_0-\nu_1)-E \pi^2(n^2-1)}}\right)
\end{aligned}\end{equation*}

 If \(z'_{n,\alpha_n}-\nu_1E \in S_1\), as \(E\rightarrow \infty\),
\begin{equation*}
\begin{aligned}\frac{\|\psi_{z'_{n,\alpha_n}}^{(2)}\|_{L^2(\Omega\setminus\Omega_1)}}{\|\psi_{z'_{n,\alpha_n}}^{(2)}\|_{L^2(\Omega)}} &= 
\frac{\|\psi_{{z'_{n,\alpha_n}}-\nu_1 E}^{(1)}\|_{L^2([x_0,a])}}{\|\psi_{{z'_{n,\alpha_n}}-\nu_1 E}^{(1)}\|_{L^2([0,a])}} + O\left(\frac{1}{E \sqrt{\frac{\pi^2}{E}+\tilde{n}^2 C_0 E -z'_{n,\alpha_n}}}\right)\\&=0 + O\left(\frac{1}{\sqrt{E^3(\tilde{n}^2 C_0-\nu_1)-E \pi^2(n^2-1)}}\right)
\end{aligned}\end{equation*}

\end{proof}

\FloatBarrier

\section*{Acknowledgements}
\thispagestyle{empty}
The author is greatly indebted to Maciej Zworski for suggesting the topic as well as providing guidance throughout the research. The author also thanks Gregory Berkolaiko and the anonymous reviewers for their valuable suggestions that led us to revise the equation~\eqref{psi1} and improve the estimate in the main theorem. 
This research was supported by the Samsung Scholarship.

\nocite{*}
\bibliographystyle{vancouver}
\bibliography{seba_proof_bib}
\end{document}